\newtheorem{corollary}{Corollary}
\newtheorem{lemma}{Lemma}
\newtheorem{theorem}{Theorem}
\theoremstyle{definition}
\newtheorem{definition}{Definition}
\theoremstyle{remark}
\newtheorem{remark}{Remark}
\newcommand{\agents}{\mathcal{A}}
\newcommand{\aaul}{\updownarrow}
\newcommand{\refa}{\mathit{ref_a}}
\newcommand{\M}{\mathcal{M}}
\renewcommand{\phi}{\varphi}
\title{Arbitrary Arrow Update Logic with Common Knowledge is neither RE nor co-RE}
\author{Louwe B. Kuijer
\institute{Department of Computer Science, University of Liverpool, UK}
\institute{LORIA, CNRS/Universit\'e de Lorraine, France}
\email{louwe.kuijer@liverpool.ac.uk}}
\begin{document}

\maketitle
\begin{abstract}
Arbitrary Arrow Update Logic with Common Knowledge (AAULC) is a dynamic epistemic logic with (i) an arrow update operator, which represents a particular type of information change and (ii) an arbitrary arrow update operator, which quantifies over arrow updates.

By encoding the execution of a Turing machine in AAULC, we show that neither the valid formulas nor the satisfiable formulas of AAULC are recursively enumerable. In particular, it follows that AAULC does not have a recursive axiomatization.
\end{abstract}

\section{Introduction}
One of the active areas of study in the field of Dynamic Epistemic Logic is that of \emph{quantified update logics}. Examples of these quantified logics include Arbitrary Public Announcement Logic (APAL) \cite{APAL}, Group Announcement Logic (GAL) \cite{Agotnes2010}, Coalition Announcement Logic (CAL) \cite{CAL}, Arbitrary Arrow Update Logic (AAUL) \cite{AAULAIJ}, Refinement Modal Logic (RML) \cite{refinement} and Arbitray Action Model Logic (AAML) \cite{hales13}.

All these logics have an operator that quantifies over all updates of a particular type. For example, in APAL the formula $[!]\phi$ means ``$[\psi]\phi$ holds for every public announcement $\psi$'' and in AAUL the formula $[\aaul]\phi$ means ``$[U]\phi$ holds for every arrow update $U$.''

One important question about these logics is the decidability of their satisfiability problems. The satisfiability problems of RML and AAML are known to be decidable \cite{hales13,refinement}, whereas for the other logics the satisfiability problem is known to be undecidable \cite{french08,agotnes2014,agotnesetal:2016,undecidable}. More precisely, the satisfiability problems for each of these undecidable logics was shown, by a reduction from the tiling problem, to be co-RE hard. But, so far, it has remained an open question whether they are co-RE.

In other words, while we know that we cannot generate a list of all satisfiable formulas of APAL, GAL, CAL and AAUL, we do not know whether it is possible to generate a list of all valid formulas of these logics. 

The question of whether the valid formulas are RE is of particular interest, since a negative result would imply the non-existence of a recursive axiomatization of the logic in question.\footnote{Finitary axiomatizations for APAL and GAL were proposed in \cite{APAL} and \cite{Agotnes2010}, respectively, but these axiomatizations contain a flaw that renders them unsound.\footnotemark See \url{http://personal.us.es/hvd/errors.html} for details and a proof of the unsoundness.}\footnotetext{We should stress that it is only the finitary axiomatizations that are unsound; the infinitary axiomatization presented in \cite{APAL} is sound and complete (although the completeness proof contains an error; again, see \url{http://personal.us.es/hvd/errors.html} for details).} After all, a recursive axiomatization would allow us to list all valid formulas.

Here, we study a variant AAULC of AAUL, which in addition to all the operators from AAUL also contains a common knowledge operator. We show that the valid formulas of AAULC are not recursively enumerable, by a reduction from the non-halting problem. The proof from \cite{undecidable}, which shows that the validities of AAUL are not co-RE also applies to AAULC. Still, the non-RE proof in this paper can be extended to a non-co-RE proof with little effort, so in this paper we prove that the validity problem of AAULC is neither RE nor co-RE.

We consider this result to be interesting in its own right. Additionally, and perhaps even more importantly, we also hope that the proof presented here can provide inspiration for proofs about the (non)existence of recursive axiomatizations for APAL, GAL, CAL and AAUL.

The structure of this paper is as follows. First, in Section~\ref{sec:aaul}, we define AAULC. Then, in Section~\ref{sec:machines} we discuss the notation that we will use to describe Turing machines. Finally, in Section~\ref{sec:reduction}, we show that both the halting problem and the non-halting problem can be reduced to the validity problem of AAULC.

\section{AAULC}
\label{sec:aaul}
Here, we provide the definitions of Arbitrary Arrow Update Logic with Common Knowledge (AAULC). The logics AUL, AAUL and AAULC were designed to reason about information change, but they can also be applied to other domains, most notably that of Normative Systems. A brief overview of the epistemic interpretation of arrow updates is given after the formal definitions. See \cite{AUL} and \cite{AAULAIJ} for a more in-depth discussion of the applications of AUL and its variants.

Let $\mathcal{P}$ be a countable set of propositional atoms, and let $\agents$ be a finite set of agents. We use five agents in our proof, so we assume that $|\agents|\geq 5$. The proof can be modified to use only one agent, but such modification requires a lot of complicated notation so we do not so here.
\begin{definition}
The language $\mathcal{L}_\mathit{AAULC}$ of AAULC is given by the following normal forms.
\begin{align*}
\phi ::= {} & p \mid \neg \phi \mid \phi \vee \phi \mid \square_a\phi \mid C\phi \mid [U]\phi \mid [\aaul]\phi\\
u := {} & (\phi,a,\phi)\\
U := {} & \{u_1,\cdots,u_n\}
\end{align*}
Where $p\in \mathcal{P}$ and $a\in \agents$. The language $\mathcal{L}_\mathit{AULC}$ of \emph{Arrow Update Logic with Common Knowledge} is the fragment of $\mathcal{L}_\mathit{AAULC}$ that does not contain the $[\aaul]$ operator. 
\end{definition}
We use $\wedge,\lozenge$ and $\langle \aaul\rangle$ in the usual way as abbreviations. 
The formulas of $\mathcal{L}_\mathit{AAULC}$ are evaluated on standard multi-agent Kripke models.
\begin{definition}
A \emph{model} is a triple $\M=(W,R,V)$, where $W$ is a set of worlds, $R:\agents\rightarrow 2^{W\times W}$ assigns to each agent an accessibility relation and $V:\mathcal{P}\rightarrow 2^W$ is a valuation.
\end{definition}
Note that we use the class $K$ of all Kripke models. Our reason for using $K$, as opposed to a smaller class such as $S5$, is that Arrow Update Logic is traditionally evaluated on $K$, see also \cite{AUL} and \cite{AAULAIJ}. For the results presented in this paper the choice of models is not very important; the proof that we use would, with some small modifications, also work on $S5$.

We also write $R_a(w)$ for $\{w'\mid (w,w')\in R(a)\}$. The semantics for most operators are as usual, so we omit their definitions. We do provide definitions for $[U]$ and $[\aaul]$, since these operators are not as well known as the others.
\begin{definition}
Let $\M=(W,R,V)$ be a model, and let $w\in W$. Then
$$\begin{array}{llr}
	\M,w\models [U]\phi & \Leftrightarrow& \M*U,w\models \phi\\
	\M,w\models [\aaul]\phi & \Leftrightarrow& \forall U\in \mathcal{L}_\mathit{AULC}: \M,w\models [U]\phi 
\end{array}$$
where $\M*U$ is given by
\begin{align*}
	\M* U := {} & (W,R*U,V),\\
	R*U(a) := {} & \{(w_1,w_2)\in R(a)\mid \\ & \hspace{10pt}\exists (\phi_1,a,\phi_2)\in U: \M,w_1\models \phi_1 \text{ and } \M,w_2\models \phi_2\}
\end{align*}
\end{definition}

\begin{definition}
Let $\M_1=(W_1,R_1,V_1)$ and $\M_2= (W_2,R_2,V_2)$ be models and let $w_1\in W_1$, $w_2\in W_2$. We say that $w_1$ and $w_2$ are $\mathit{AULC}$-indistinguishable if for every $\phi\in \mathcal{L}_\mathit{AULC}$, we have $\M,w_1\models \phi \Leftrightarrow \M,w_2\models \phi$.
\end{definition}
An arrow update $U$ represents an information-changing event, of a kind that is sometimes referred to as a \emph{semi-private announcement}. Unlike with a public announcement, the information gained though a semi-private announcement is not common knowledge. It is, however, common knowledge what information is gained under which conditions.

A typical example is the following. Suppose that $a$ and $b$ are playing a game of cards, where each player holds one card. The cards have been dealt to them, face down. Now, $a$ picks up her card and looks at it. By doing this, agent $a$ learns what card she holds. This new information is not common knowledge, since $b$ doesn't learn $a$'s card, so this event cannot be represented as a public announcement. It is common knowledge, however, under which conditions $a$ gains which information: if $a$ has the Ace of Spades, then she learns that she has the Ace of Spades, and so on. The event of $a$ picking up her card can therefore be considered a semi-private announcement, which can be represented as an arrow update.

A clause $(\phi_1,a,\phi_2)\in U$ says that in every world that satisfies $\phi_1$, the new information gained by agent $a$ is consistent with $\phi_2$. If there are multiple clauses that apply to a single world, we consider them to apply disjunctively, i.e., the new information is consistent with both postconditions: if $(\phi_1,a,\phi_2), (\psi_1,a,\psi_2)\in U$ and a world satisfies both $\phi_1$ and $\psi_1$, then $a$'s new information is consistent with every world that satisfies either $\phi_2$ or $\psi_2$. We assume that $U$ provides a full description of the new information, so any world consistent with the new information satisfies the postcondition of at least one applicable clause.

Semantically, this means that a transition $(w_1,w_2)\in R(a)$ is retained by the update $[U]$ if and only there is at least one clause $(\phi_1,a,\phi_2)\in U$ such that $w_1$ satisfies $\phi_1$ and $w_2$ satisfies $\phi_2$. Every other transition is removed from the model.

The example discussed above, where $a$ looks at her card, is represented by the arrow update 
\begin{equation*}U_\mathit{cards} := \{(\top,b,\top)\}\cup \{(\mathit{card},a,\mathit{card})\mid \mathit{card}\in\mathit{deck}\},\end{equation*}
where $\mathit{deck}$ is the deck from which the cards were dealt. The clause $(\top,b,\top)$ states that $b$ doesn't directly learn anything new: every distribution of cards is consistent with $b$'s new information. The clause $(\mathit{card},a,\mathit{card})$, for $\mathit{card}\in \mathit{deck}$ states that if $a$ holds $\mathit{card}$, then by looking at her card she learns that she holds $\mathit{card}$.

The arbitrary arrow update operator $[\aaul]$ quantifies over all arrow updates that do not themselves contain the $[\aaul]$ operator. So $\M,w\models [\aaul]\phi$ if and only if $\M,w\models [U]\phi$ for every $U\in \mathcal{L}_{AULC}$. This restriction to $[\aaul]$-free updates keeps the semantics from becoming circular.\footnote{Similar restrictions exist in the other quantified update logics. In APAL, for example, the arbitrary arrow update operator $[!]$ quantifies over all public announcements $[\psi]$ where $\psi\in \mathcal{L}_\mathit{PAL}$.}

The operator $[\aaul]$ allows us to ask, inside the object language, whether there is a semi-private announcement that makes a formula true. So, for example, $\M,w\models [\aaul](\square_ap \wedge \neg \square_bp)$ asks whether, in the situation represented by the model $\M,s$, there is a semi-private announcement that informs $a$ of the truth of $p$ without letting $b$ know that $p$ is true. 
Recall that an event is a semi-private announcement if it is common knowledge under what conditions which information is gained. So $[\aaul](\square_ap\wedge \neg \square_bp)$ is true if and only if there is a method to inform $a$ of the truth of $p$ without informing $b$, under the assumption that the method itself is common knowledge. Or, in other (and slightly trendier) words, $[\aaul](\square_ap \wedge \neg \square_bp)$ is true if and only if it is possible to inform $a$ but not $b$ of the truth of $p$, without relying on \emph{security through obscurity}.

For more examples of the applications of arrow updates and arbitrary arrow updates, see \cite{AUL} and \cite{AAULAIJ}.

\begin{remark}
In the semantics of AAULC, we let $[\aaul]$ quantify over arrow updates in $\mathcal{L}_\mathit{AULC}$, so these updates may contain the common knowledge operator $C$. This means that our $[\aaul]$ operator is slightly different from the one in AAUL \cite{AAULAIJ}, since the quantification in AAUL is over updates that do not contain $C$.

This difference is not important for the current paper. All the results presented here still hold if we let $[\aaul]$ quantify only over the updates that contain neither $[\aaul]$ nor $C$.
\end{remark}

\section{Turing Machines}
\label{sec:machines}
A full discussion of Turing machines is outside the scope of this paper. We assume that the reader is familiar with the basic ideas of a Turing machine; here we only concern us with the notation that we use to represent Turing machines.

\begin{definition}
A \emph{Turing machine} $T$ is a tuple $T=(\Lambda,S,\Delta)$, where $\Lambda$ is a finite alphabet such that $\alpha_0\in \Lambda$, $S$ is a finite set of states such that $s_0,s_\mathit{end}\in S$ and $\Delta:\Lambda\times S\rightarrow \Lambda\times S\times \{\mathit{left},\mathit{remain},\mathit{right}\}$ is a transition function.
\end{definition}
We write $\Delta_1,\Delta_2$ and $\Delta_3$ for the projections of $\Delta$ to its first, second and third components. So if $\alpha$ is the symbol currently under the read/write head and $s$ is the current state, then the machine will write the symbol $\Delta_1(\alpha,s)$, go to the state $\Delta_2(\alpha,s)$ and move the read/write head in direction $\Delta_3(\alpha,s)$.

We assume, without loss of generality, that the state $s_0$ doesn't re-occur. Furthermore, note that we defined $\Delta$ to be a function with $\Lambda\times S$ as domain. So the machine $T$ continues after reaching $s_\mathit{end}$. This is notationally more convenient than letting $T$ terminate once it reaches $s_\mathit{end}$. We don't care about what happens after reaching $s_\mathit{end}$, though.

\begin{definition}
A Turing machine $T$ \emph{halts} if, when starting in state $s_0$ with a tape that contains only the symbol $\alpha_0$, the system reaches the state $s_\mathit{end}$.
\end{definition}
It is well known that the halting Turing machines are recursively enumerable, but the non-halting ones are not \cite{turing1937}. 

The Turing machines that we consider are deterministic, so the execution of a machine $T$ on a tape that only contains $\alpha_0$ happens in exactly one way. We call this the run of $T$. One straightforward way to represent this run of $T$ is to consider it as a function $\mathit{run}^T : \mathbb{Z}\times \mathbb{N}\rightarrow \Lambda\times S \times \{0,1\}$, where $\mathit{run}^T(n,m)=(\alpha,s,x)$ means that at time $m$, the symbol in position $n$ on the tape is $\alpha$, the machine is in state $s$ and the read/write head is at position $n$ if and only if $x=1$.

For notational reasons, it is convenient to extend this function to $\mathit{run}^T : \mathbb{Z}\times \mathbb{Z}\rightarrow \Lambda\times S \times \{0,1\}$, where $\mathit{run}^T(n,m)=(\alpha_0,s_\mathit{void},0)$ for all $m<0$. Doing so allows us to avoid a number of special cases that we would otherwise have to consider for $m=0$. Like with $\Delta$, we use $\mathit{run}^T_1$, $\mathit{run}^T_2$ and $\mathit{run}^T_3$ to refer to the projections to the first, second and third coordinates.

\section{The Reduction}
\label{sec:reduction}
For every Turing machine $T$, we want to represent the unique run $\mathit{run}^T$ in AAULC. In order to do this, we start by encoding certain facts as propositional atoms. For every state $s\in S\cup \{s_\mathit{void}\}$ and every element $\alpha\in \Lambda$ of the alphabet, we assume that $s,\alpha\in \mathcal{P}$. We are free to do this, since $\mathcal{P}$ is countably infinite, while $S$ and $\Lambda$ are finite. As one might expect, we use the propositional atom $s$ to represent the state of the Turing machine at a particular point in time being $s$, and we use the atom $\alpha$ to represent a particular position of the tape containing the symbol $\alpha$ at a particular point in time. Additionally, we assume that $\mathit{pos},\mathit{lpos},\mathit{rpos}\in \mathcal{P}$. These three atoms are used to indicate that a particular point on the tape is the current position of the read/write head, to the left of the current position of the read/write head and to the right of the current position of the read/write head, respectively.

We also assume that there are five agents named $a,\mathit{right},\mathit{left},\mathit{up}$ and $\mathit{down}$ in $\mathcal{A}$. Note that we can do this because we assumed that $|\mathcal{A}|\geq 5$. With these preliminaries out of the way, we can define the formula $\phi_T$ that represents the Turing machine $T$ in AAULC.

\begin{definition}
Let $T=(\Lambda,S,\Delta)$ be a Turing machine. The formula $\phi_T$ is given by
\begin{equation*}\phi_T := C\psi_\mathit{grid}\wedge C\psi_\mathit{sane}\wedge C\psi_T\wedge s_0\wedge \mathit{pos},\end{equation*}
where $\psi\mathit{grid}$, $\psi_\mathit{sane}$ and $\psi_T$ are as shown in Tables~\ref{table:psi_grid}--\ref{table:psi_T}.
\end{definition}

\begin{table*}[h]
\caption{The Formula $\psi_\mathit{grid}$.}
\label{table:psi_grid}
$$\begin{array}{rl}
\psi_\mathit{grid} := {} & \refa \wedge\mathit{direction}\wedge \mathit{inverse}\wedge\mathit{commute} \\
D := {} & \{\mathit{left},\mathit{right},\mathit{up},\mathit{down}\}\\
\mathit{INV} := {} & \{(\mathit{left},\mathit{right}),(\mathit{right},\mathit{left}),(\mathit{up},\mathit{down}),(\mathit{down},\mathit{up})\}\\
\mathit{COMM} := {} & (\{\mathit{up},\mathit{down}\}\times \{\mathit{left},\mathit{right}\})\cup (\{\mathit{left},\mathit{right}\}\times \{\mathit{up},\mathit{down}\})\\
\refa := {} & \lozenge_a\top\wedge [\aaul]\square_a\lozenge_a\top\\
\mathit{no\_other} := {} & \bigwedge_{x\in \agents\setminus (D \cup \{a\})}\square_x\bot\\
\mathit{direction} := {} &\bigwedge_{x\in D}(\lozenge_x\top \wedge [\aaul](\lozenge_x\lozenge_a\top\rightarrow \square_x\lozenge_a\top))\\
\mathit{inverse} := {} & [\aaul](\lozenge_a\top\rightarrow \bigwedge_{(x,y)\in \mathit{INV}}\square_x\square_y\lozenge_a\top)\\
\mathit{commute} := {} & [\aaul]\bigwedge_{(x,y)\in \mathit{COMM}}(\lozenge_x\lozenge_y\lozenge_a\top\rightarrow \square_y\square_x\lozenge_a\top)\\
\end{array}$$
\end{table*}
\begin{table*}[h]
\caption{The Formula $\psi_\mathit{sane}$.}
\label{table:psi_sane}
$$\begin{array}{rl}
\psi_\mathit{sane} := & \mathit{position}_1\wedge \mathit{position}_2\wedge\mathit{one\_state}\wedge \mathit{same\_state} \wedge \\
 &\mathit{one\_symbol}\wedge \mathit{void\_state}\wedge \mathit{initial\_symbol} \wedge \mathit{unchanged}\\
\mathit{position}_1 := & \neg (\mathit{pos}\wedge \mathit{lpos}) \wedge \neg (\mathit{pos}\wedge \mathit{rpos})\wedge \neg (\mathit{rpos}\wedge \mathit{lpos})\\
\mathit{position}_2 := & ((\mathit{pos}\vee \mathit{rpos})\rightarrow \square_\mathit{right}\mathit{rpos}) \wedge ((\mathit{pos}\vee \mathit{lpos})\rightarrow \square_\mathit{left}\mathit{lpos})\\
\mathit{one\_state} := & \bigvee_{s\in \mathit{states}}(s\wedge \bigwedge_{s'\in \mathit{states}\setminus \{s\}}\neg s)\\
\mathit{same\_state} := & \bigwedge_{s\in \mathit{states}}(s\rightarrow (\square_\mathit{left}s\wedge \square_\mathit{right}s))\\
\mathit{one\_symbol} := & \bigvee_{\alpha\in\mathit{symbols}}(\alpha \wedge \bigwedge_{\beta\in \mathit{symbols}\setminus \{\alpha\}}\neg \beta)\\
\mathit{void\_state} := & (s_0\vee s_\mathit{void})\rightarrow \square_\mathit{down}s_\mathit{void}\\
\mathit{initial\_symbol} := {} & s_0\rightarrow \alpha_0 \\
\mathit{unchanged} := {} & \bigwedge_{\alpha \in \mathit{symbols}}((\neg \mathit{pos}\wedge \alpha)\rightarrow\square_\mathit{up}\alpha) \\
\end{array}$$
\end{table*}
\begin{table*}[h]
\caption{The Formula $\psi_T$.}
\label{table:psi_T}
$$\begin{array}{rl}
\psi_T := {} & \mathit{position\_change}_T\wedge \mathit{state\_change}_T\wedge \mathit{symbol\_change}_T\\\
\mathit{position\_change}_T := {} & \bigwedge_{\{(s,\alpha)\mid \Delta_3(s,\alpha)=\mathit{left}\}}((\mathit{pos}\wedge s\wedge \alpha) \rightarrow \square_\mathit{up}\square_\mathit{left}\mathit{pos})\wedge \\ 
& \bigwedge_{\{(s,\alpha)\mid \Delta_3(s,\alpha)=\mathit{right}\}}((\mathit{pos}\wedge s\wedge \alpha) \rightarrow \square_\mathit{up}\square_\mathit{right}\mathit{pos})\wedge\\
& \bigwedge_{\{(s,\alpha)\mid \Delta_3(s,\alpha)=\mathit{remain}\}}((\mathit{pos}\wedge s\wedge \alpha) \rightarrow \square_\mathit{up}\mathit{pos})\\
\mathit{state\_change}_T :={} & \bigwedge_{s'\in\mathit{states}}\bigwedge_{\{(s,\alpha)\mid \Delta_2(s,\alpha)=s'\}}((\mathit{pos}\wedge s\wedge \alpha)\rightarrow \square_\mathit{up}s')\\
\mathit{symbol\_change}_T := {} & \bigwedge_{\beta\in \mathit{symbols}}\bigwedge_{\{(s,\alpha)\mid \Delta_1(s,\alpha)=\beta\}}((\mathit{pos}\wedge s\wedge \alpha)\rightarrow \square_\mathit{up}\beta)\\
\end{array}$$
\end{table*}

This formula may look somewhat intimidating, but apart from $\psi_\mathit{grid}$ all named formulas are very simple encodings of aspects of a Turing machine. The formula $\psi_\mathit{grid}$, as the name might suggest, encodes a $\mathbb{Z}\times\mathbb{Z}$ grid.

We first show that $\phi_T$ is satisfiable. After that, we show that any model that satisfies $\phi_T$ contains a representation of $\mathit{run}_T$.

\begin{lemma}
\label{lemma:sat}
For every Turing machine $T$, the formula $\phi_T$ is satisfiable.
\end{lemma}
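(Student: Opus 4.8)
The plan is to exhibit an explicit model that satisfies $\phi_T$. The natural candidate is the "intended" model: take $W = \mathbb{Z}\times\mathbb{Z}$, thought of as positions $n$ on the tape (first coordinate) at times $m$ (second coordinate, extended to negative $m$ as in the definition of $\mathit{run}^T$). The four directional agents are interpreted as the grid moves: $R(\mathit{right}) = \{((n,m),(n+1,m))\}$, $R(\mathit{left}) = \{((n,m),(n-1,m))\}$, $R(\mathit{up}) = \{((n,m),(n,m+1))\}$, $R(\mathit{down}) = \{((n,m),(n,m-1))\}$, and the agents in $\agents\setminus(D\cup\{a\})$ get the empty relation. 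The valuation is read off from $\mathit{run}^T$: the atom $\alpha$ is true at $(n,m)$ iff $\mathit{run}^T_1(n,m)=\alpha$, the atom $s\in S\cup\{s_{\mathit{void}}\}$ is true at $(n,m)$ iff $\mathit{run}^T_2(n,m)=s$, and $\mathit{pos}$ (resp.\ $\mathit{lpos}$, $\mathit{rpos}$) is true at $(n,m)$ iff the head is at position $n$ (resp.\ strictly left of, strictly right of) at time $m$. For the agent $a$ we need $\lozenge_a\top$ everywhere and the behaviour demanded by $\refa$; the cleanest choice is to add at each world a private reflexive-style witness, e.g.\ set $R(a)(w) = \{w\}$ at every $w$, or attach a fresh $a$-successor to each world — whichever makes $\refa$ hold will be checked below.

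The key steps, in order: (1) Verify $\mathcal{M},(0,0)\models s_0\wedge\mathit{pos}$, which is immediate from the definition of $\mathit{run}^T$ at time $0$. (2) Verify $C\psi_{\mathit{grid}}$: since the frame is point-generated-connected under $D$, one shows $\psi_{\mathit{grid}}$ holds at \emph{every} world. Here $\mathit{direction}$, $\mathit{inverse}$, $\mathit{commute}$ are statements about the grid structure (each directional relation is a total function, $\mathit{left}/\mathit{right}$ and $\mathit{up}/\mathit{down}$ are mutually inverse, and horizontal moves commute with vertical moves) but phrased \emph{after an arbitrary arrow update} via $[\aaul]$, relativised to $\lozenge_a\top$; so this is where one must argue that the grid properties survive every $U\in\mathcal{L}_{\mathit{AULC}}$ update. (3) Verify $C\psi_{\mathit{sane}}$: each conjunct is a local first-order-flavoured constraint (uniqueness of the position marker, propagation of $\mathit{rpos}/\mathit{lpos}$ along $\mathit{right}/\mathit{left}$, uniqueness and left/right-propagation of the state atom, uniqueness of the symbol atom, $s_{\mathit{void}}$ below $s_0$ or $s_{\mathit{void}}$, $\alpha_0$ at $s_0$, and symbols unchanged going $\mathit{up}$ away from the head) — each follows directly from properties of $\mathit{run}^T$, in particular the extension convention for $m<0$. (4) Verify $C\psi_T$: the three conjuncts $\mathit{position\_change}_T$, $\mathit{state\_change}_T$, $\mathit{symbol\_change}_T$ are exactly the transition function $\Delta$ read across one $\mathit{up}$-step, and hold because $\mathit{run}^T$ is the run of $T$. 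Since $C$ quantifies over the worlds reachable by arbitrary finite paths and the $D$-generated subframe from $(0,0)$ is all of $\mathbb{Z}\times\mathbb{Z}$ (the $a$-edges only reach the private witnesses, where one checks the $\square$-style conjuncts hold vacuously or trivially), the three $C\psi$ conjuncts reduce to "$\psi_{\mathit{grid}},\psi_{\mathit{sane}},\psi_T$ hold at every grid world and every witness world."

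The main obstacle is step (2), specifically the conjuncts of $\psi_{\mathit{grid}}$ guarded by $[\aaul]$. One must show, for an \emph{arbitrary} arrow update $U$, that in $\mathcal{M}*U$ the formulas like $\lozenge_x\lozenge_a\top\rightarrow\square_x\lozenge_a\top$ still hold. The point of the $\lozenge_a\top$ guard is precisely that it marks which worlds "still count" after the update: $\lozenge_a\top$ acts as a definable nominal-like tag (thanks to $\refa$, which must first be shown to force $a$ to retain a successor at every relevant world even after updates — this is the delicate part and presumably why $\refa$ is built with its own $[\aaul]$). Once one establishes that $\lozenge_a\top$ behaves correctly under updates, the grid conjuncts follow because each directional relation in $\mathcal{M}$ is a \emph{function}, so "$x$ leads to a live world" already pins down the unique $x$-successor, making the $\square_x$ half automatic; similarly $\mathit{inverse}$ and $\mathit{commute}$ use functionality plus the genuine inverse/commuting identities of the grid. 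I would isolate this as a sublemma ("for every $U\in\mathcal{L}_{\mathit{AULC}}$ and every grid world $w$, $\mathcal{M}*U,w\models\lozenge_a\top$, and the $D$-edges out of $w$ that reach a $\lozenge_a\top$-world are exactly..."), prove it by the functionality of the grid relations, and then the rest of the verification is routine bookkeeping over the finitely many conjuncts.
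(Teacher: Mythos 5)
Your proposal matches the paper's proof: it exhibits the same intended model ($W=\mathbb{Z}\times\mathbb{Z}$ with the four directional relations as grid moves, the valuation read off from $\mathit{run}^T$, and $\mathit{lpos}/\mathit{rpos}$ marking worlds left/right of the head), verifies $\psi_\mathit{sane}$ and $\psi_T$ directly from the properties of the run, and handles the $[\aaul]$-guarded conjuncts of $\psi_\mathit{grid}$ by exactly the paper's key observation — each directional relation is a function and $w$ is its own unique $a$-successor, so no arrow update can retain an $a$-arrow on one relevant successor while removing it from ``another,'' there being no other. The one point where you hedge, the choice of $R(a)$, must be resolved as the reflexive loop $R(a)=\{(w,w)\mid w\in W\}$ (the paper's choice); attaching a fresh $a$-successor distinct from $w$ would break the ``same arrow'' argument needed for the $[\aaul]\square_a\lozenge_a\top$ conjunct of $\refa$.
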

\begin{proof}
Let $\M=(W,R,V)$ be given as follows. Take $W=\mathbb{Z}\times \mathbb{Z}$. For every direction $x\in D$ let $(w,w')\in R(x)$ if and only if $w'$ is immediately to the $x$ of $w$, and let $R(a)=\{(w,w)\mid w\in W\}$. For $x\not \in D\cup \{a\}$, let $R(x)=\emptyset$. Now, for any $\alpha\in \Lambda$ and $s\in S\cup\{s_\mathit{void}\}$, let $V(\alpha)=\{(n,m)\mid \mathit{run}^T_1(n,m)=\alpha\}$ and $V(s)=\{(n,m)\mid \mathit{run}^T_2(n,m)$. Furthermore, let $V(\mathit{pos}) =\{(n,m)\mid \mathit{run}^T_3(n,m)=1\}$, $V(\mathit{lpos})=\{(n,m)\mid \exists n'>n : \mathit{run}^T_3(n',m)=1\}$ and $V(\mathit{rpos})=\{(n,m)\mid \exists n'<n : \mathit{run}^T_3(n',m)=1\}$. (In other words, $\mathit{lpos}$ holds if you are to the left of $\mathit{pos}$ and $\mathit{rpos}$ holds if you are to the right of $\mathit{pos}$.)

We claim that $\M,(0,0)\models \phi_T$. Since $\mathit{run}^T(0,0)=(\alpha_0,s_0,1)$, we have $\M,(0,0)\models \mathit{pos}\wedge\mathit{s_0}$. This leaves the conjuncts $C\psi_\mathit{grid}, C\psi_\mathit{sane}$ and $C\psi_T$. We start by looking at $C\psi_\mathit{sane}$.

The conjuncts of $\psi_\mathit{sane}$ hold under the following conditions.
\begin{itemize}
	\item $\mathit{position}_1$ holds if being the position of the head, being to the right of the head and being to the left of the head are mutually exclusive.
	\item $\mathit{position}_2$ holds if all worlds to the left of the head satisfy $\mathit{lhead}$ and all worlds to the right of the head satisfy $\mathit{rhead}$.
	\item $\mathit{initial\_symbol}$ holds if, at the initial state $s_0$, the entire tape contains the symbol $\alpha_0$.
	\item $\mathit{one\_state}$ holds if at every $(n,m)$, the system is in exactly one state.
	\item $\mathit{same\_state}$ holds if for every $n,m,k\in \mathbb{Z}$, the worlds $(n,m)$ and $(k,m)$ are in the same state. (So the state depends only on time, not on the tape position.)
	\item $\mathit{one\_symbol}$ holds if at every time $m$, every position $n$ contains exactly one symbol.
	\item $\mathit{void\_state}$ holds if at every time before $s_0$, the system was in the dummy state $s_\mathit{void}$.
	\item $\mathit{unchanged}$ holds if every symbol that is not under the read/write head remains unchanged.
\end{itemize}
All of these conditions are satisfied, because the valuation of $\M$ was derived from the run of a Turing machine. So $\M,(0,0)\models C\psi_\mathit{sane}$. 
Now, consider the conjuncts of $\psi_T$.
\begin{itemize}
	\item $\mathit{position\_change}_T$ holds if the read/write head moves in the appropriate direction, as specified by $\Delta$.
	\item $\mathit{state\_change}_T$ holds if the state changes as specified by $\Delta$.
	\item $\mathit{symbol\_change}_T$ holds if the symbol under the read/write head is written as specified by $\Delta$.
\end{itemize}
These conditions are also satisfied, because the valuation of $\M$ was derived from $\mathit{run}_T$. So $\M,(0,0)\models C\psi_T$. Left to show is that $\M,(0,0)\models \psi_\mathit{grid}$.

So take any $w\in W$. The world $w$ itself is the only $a$-successor of $w$. So we have $\M,w\models \lozenge_a\top$. Furthermore, it is impossible for any arrow update to retain the $a$-arrow from $w$ while removing the $a$-arrows from its successor, since they are the same $a$-arrow. It follows that $\M,w\models [\aaul]\square_a\lozenge_a\top$. So we have shown that $\M,w\models \refa$.

We have defined $R_x=\emptyset$ for all $x\not \in D\cup \{a\}$, so we also have $\M,w\models \mathit{no\_other}$.

Now, consider $\mathit{direction}$. Take any $x\in D$. There is a $x$-arrow from $w$ to the world $w'$ to its $x$, so $\M,w\models \lozenge_x\top$. Furthermore, since this $w'$ is the only $x$-successor of $w$, it follows that it is impossible to retain an $a$-arrow on one $x$-successor of $w$ while removing all $a$-arrows from another. So $\M,w\models [\aaul](\lozenge_x\lozenge_a\top\rightarrow\square_x\lozenge_a\top)$. This holds or every $x\in D$, so $\M,w\models\mathit{direction}$.

For opposite directions $x$ and $y$, there is exactly one $x$-$y$-successor of $w$, namely $w$ itself. It follows that it is impossible for any arrow update to retain the $a$-arrow on $w$ while removing all $a$-arrows from its $x$-$y$-successor, so $\M,w\models\mathit{inverse}$.

Finally, for perpendicular directions $x$ and $y$ there is exactly one $x$-$y$-successor $w'$ of $w$, and this $w'$ is also the unique $y$-$x$-successor of $w$. So it is impossible for an arrow update to retain the $a$-arrow from the $x$-$y$-successor while removing it from some $y$-$x$-successor. So $\M,w\models \mathit{commute}$.

This completes the proof that $\M,w\models \psi_\mathit{grid}$ and therefore the proof that $\M,w\models \phi_T$. So $\phi_T$ is satisfiable.
\end{proof}

\begin{lemma}
\label{lemma:reduction}
If $\M,w_0\models \phi_T$, then $\M,w_0\models C\neg s_\mathit{end}$ if and only if $T$ is non-halting.
\end{lemma}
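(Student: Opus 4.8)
The plan is to show that any model $\M$ with $\M,w_0\models\phi_T$ essentially \emph{contains} the grid model built in the proof of Lemma~\ref{lemma:sat}: that every world reachable from $w_0$ is $\mathit{AULC}$-indistinguishable from a point of a $\mathbb{Z}\times\mathbb{Z}$ grid whose atoms record exactly $\mathit{run}^T$. Granting this, the lemma follows at once: $\M,w_0\models C\neg s_\mathit{end}$ holds iff no reachable world satisfies the atom $s_\mathit{end}$, iff no grid point does (atoms are $\mathcal{L}_\mathit{AULC}$-formulas), iff $\mathit{run}^T_2(n,m)\neq s_\mathit{end}$ for all $(n,m)$; and since $\mathit{run}^T_2$ equals $s_\mathit{void}\neq s_\mathit{end}$ at negative times and is position-independent at non-negative times, this last condition says exactly that $T$ never reaches $s_\mathit{end}$, i.e.\ that $T$ is non-halting. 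For the converse direction note that each $g(n,m)$ is itself a reachable world, being the end of a direction-path from $w_0$.

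First, by the conjuncts $C\psi_\mathit{grid}\wedge C\psi_\mathit{sane}\wedge C\psi_T$ every world reachable from $w_0$ (including $w_0$) satisfies $\psi_\mathit{grid}\wedge\psi_\mathit{sane}\wedge\psi_T$, and by $\mathit{no\_other}$ only the relations of $\mathit{left},\mathit{right},\mathit{up},\mathit{down},a$ carry arrows there. Using the conjuncts $\lozenge_x\top$ ($x\in D$) of $\mathit{direction}$ I would build by induction a total map $g\colon\mathbb{Z}\times\mathbb{Z}\to W$ with $g(0,0)=w_0$, $g(n+1,m)\in R_\mathit{right}(g(n,m))$, $g(n,m+1)\in R_\mathit{up}(g(n,m))$, and similarly for $\mathit{left}$ and $\mathit{down}$. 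The key property is \emph{coherence}: for every reachable $w$, (i)~all $x$-successors of $w$ (fixed $x\in D$) are pairwise $\mathit{AULC}$-indistinguishable; (ii)~every $a$-successor of $w$ is $\mathit{AULC}$-indistinguishable from $w$; (iii)~for opposite $x,y$ every $x$-then-$y$-successor of $w$ is $\mathit{AULC}$-indistinguishable from $w$; (iv)~for perpendicular $x,y$ the $x$-then-$y$- and $y$-then-$x$-successors of $w$ agree up to $\mathit{AULC}$-indistinguishability. Each is forced by the matching $[\aaul]$-conjunct of $\psi_\mathit{grid}$ ($\mathit{direction}$, $\refa$, $\mathit{inverse}$, $\mathit{commute}$) by a selection argument: if $\chi\in\mathcal{L}_\mathit{AULC}$ separates the two worlds in question, then — after replacing $\chi$ by $\neg\chi$ if needed so that $\chi$ holds at the ``source'' of the relevant path — the $[\aaul]$-free update $U$ consisting of the clauses $(\top,x,\top)$ (retaining every direction-arrow) and the single clause $(\chi,a,\top)$ retains an $a$-arrow out of the source while deleting all $a$-arrows out of the other endpoint; in $\M*U$ this falsifies the body of the relevant $[\aaul]$-conjunct of $\psi_\mathit{grid}$ at a reachable world, contradicting $\M,w_0\models C\psi_\mathit{grid}$. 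Chaining (i)--(iii) along an arbitrary path from $w_0$ then shows that every reachable world is $\mathit{AULC}$-indistinguishable from some $g(n,m)$; here one uses that $\lozenge_x\chi\in\mathcal{L}_\mathit{AULC}$ and that, by (i) and (iii), $g(n,m)\models\lozenge_x\chi$ iff the intended neighbour of $g(n,m)$ satisfies $\chi$.

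Next I would prove, by induction on $m$, that the truth values at $g(n,m)$ of the atoms in $\Lambda$, of the atoms in $S\cup\{s_\mathit{void}\}$, and of $\mathit{pos},\mathit{lpos},\mathit{rpos}$ coincide with $\mathit{run}^T(n,m)$ for every $n$. For $m\leq 0$ this uses $\mathit{void\_state}$ (pushing $s_\mathit{void}$ downward from $g(0,0)$) together with $\mathit{same\_state}$ and $\mathit{one\_state}$; for $m=0$ it uses $s_0\wedge\mathit{pos}$ at $w_0$ with $\mathit{same\_state}$, $\mathit{initial\_symbol}$, $\mathit{one\_symbol}$, $\mathit{position}_1$ and $\mathit{position}_2$; and for the step $m\to m+1$ one locates the head cell $n_h$ (the unique $n$ with $g(n,m)\models\mathit{pos}$, by the induction hypothesis), notes $g(n_h,m)\models\mathit{pos}\wedge s\wedge\alpha$ with $\alpha=\mathit{run}^T_1(n_h,m)$ and $s=\mathit{run}^T_2(n_h,m)$, and applies the conjuncts of $\psi_T$ ($\mathit{state\_change}_T$, $\mathit{symbol\_change}_T$, $\mathit{position\_change}_T$) together with $\mathit{unchanged}$, $\mathit{same\_state}$, $\mathit{one\_state}$, $\mathit{one\_symbol}$, $\mathit{position}_1$, $\mathit{position}_2$ to pin down every atom of row $m+1$; coherence clause (iv) is what lets ``$\square_\mathit{up}\square_\mathit{left}\mathit{pos}$ at $g(n_h,m)$'' actually constrain $g(n_h-1,m+1)$, and similarly for the other two-step detours. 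The chain of equivalences of the first paragraph then goes through.

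The step I expect to be the main obstacle is coherence: one must design, for each of $\refa,\mathit{direction},\mathit{inverse},\mathit{commute}$, precisely the update that isolates an $a$-arrow at one endpoint while destroying all of them at the other (and, for $\mathit{commute}$, pick the right conjunct of the body to falsify), and verify that these updates are genuinely $[\aaul]$-free, which they are, since their only non-trivial clause uses an $\mathcal{L}_\mathit{AULC}$-formula. The remaining work with $\psi_\mathit{sane}$ and $\psi_T$ is routine bookkeeping, as those conjuncts are transparent encodings of the single-step behaviour of $T$.
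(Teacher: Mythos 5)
Your proposal is correct and follows essentially the same route as the paper's own proof: use the $[\aaul]$-conjuncts of $\psi_\mathit{grid}$ to force $\mathit{AULC}$-indistinguishability of the relevant successors (via exactly the kind of distinguishing arrow update you describe), conclude that the reachable part of $\M$ is a $\mathbb{Z}\times\mathbb{Z}$ grid up to indistinguishability, and then read off $\mathit{run}^T$ from $\psi_\mathit{sane}$, $\psi_T$ and $\mathit{pos}\wedge s_0$. Your write-up is in fact more explicit than the paper's (the concrete update $\{(\top,x,\top)\mid x\in D\}\cup\{(\chi,a,\top)\}$, the map $g$, and the induction on $m$ are all left implicit there), but the underlying argument is the same.
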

\begin{proof}
Suppose $\M,w_0\models \phi_T$. Then, by definition, $\M,w_0\models\psi_\mathit{grid}\wedge C\psi_\mathit{sane}\wedge C\psi_T\wedge \mathit{pos}\wedge s_0$. We will first show that $\M,w_0\models C\psi_\mathit{grid}$ implies that the model $\M$ is grid-like. Then, we will show that the remaining subformulas imply that the model $\M$ represents $\mathit{run}^T$.

By the $\lozenge_a\top$ conjunct of $\refa$, every reachable world $w$ has at least one $a$-successor. If any $a$-successor of $w$ is $\mathit{AULC}$-distinguishable from $w$, then it would be possible for an arrow update to remove the $a$-arrow from this successor while retaining the $a$-arrow from $w$. This would contradict the $[\aaul]\square_a\lozenge_a\top$ conjunct of $\refa$.

Now, for any $x\in D$, consider the $x$-successors of $w$, of which there is at least one by the $\lozenge_x\top$ part of $\mathit{direction}$. If any of these successors were $\mathit{AULC}$-distinguishable, it would be possible to remove the $a$-arrow from one of them but not from the other. This would contradict the $[\aaul](\lozenge_x\lozenge_a\top\rightarrow \square_x\lozenge_a\top)$ part of $\mathit{direction}$.

For any opposite directions $x$ and $y$, consider any $x$-$y$-successor $w'$ of $w$. If any $\mathit{AULC}$ formula could distinguish between $w'$ and $w'$,
it would be possible for an arrow update to retain the $a$-arrow from $w$ while removing the $a$-arrow from $w'$, which would contradict $\mathit{inverse}$.

Finally, for any perpendicular direction $x$ and $y$, consider any $x$-$y$- and $y$-$x$-successors of $w$. If these successors were $\mathit{AULC}$-distinguishable, it would be possible for an arrow update to remove one while retaining the other, contradicting $\mathit{commute}$.

Taken together, the above facts imply that $\M$ contains a representation of a grid $\mathbb{Z}\times\mathbb{Z}$ where, for every $x\in D$, we have $w'\in R_x(w)$ if and only if $w'$ is to the $x$ of $w$. Furthermore, if $w$ represents $(n,m)$ then so does every $a$-successor of $w$. Every world $(n,m)$ may be represented by multiple worlds in $\M$, but all the worlds that represent a single grid point are $\mathit{AULC}$-indistinguishable from one another. Furthermore, the formula $\mathit{no\_other}$ implies that we cannot escape this grid, every reachable world represents some grid point $(n,m)$.

The remaining subformulas of $\phi_T$ guarantee that this grid encodes $\mathit{run}_T$. First, consider $\psi_\mathit{sane}$. This formula enforces a number of general sanity constraints.
\begin{itemize}
	\item The formula $\mathit{position}_1$ says that being the current position of the read/write head, being to the right of the position of the head and being to the left of the position of the head are mutually exclusive. 
	\item The formula $\mathit{position}_2$ says that if you are either at the current position of the read/write head or to the right of the current position, then if you go further to the right then you will be to the right of the current position. Similarly, it says that if you are either at the current position of the head or to the left of it and go further left, then you will be to the left of the head. Together with $\mathit{position}_1$, this guarantees that the read/write head is in at most one position at any time step. (Ensuring that the head is in at least one position at every time is done later).
	\item The formula $\mathit{one\_state}$ says that every world is in exactly one state.
	\item The formula $\mathit{same\_state}$ says that if a world is in state $s$, then the worlds to the left and right are also in state $s$. So all the worlds that represent a single time step satisfy the same state. Together with $\mathit{one\_state}$, this implies that every time step is associated with exactly one state.
	\item The formula $\mathit{one\_symbol}$ says that every world satisfies exactly one symbol.
	\item The formula $\mathit{void\_state}$ says that every time before the initial state $s_0$ is in the dummy state $s_\mathit{void}$. So the worlds satisfying $s_0$ are where the computation starts.
	\item The formula $\mathit{initial\_symbol}$ says that the $s_0$ worlds satisfy $\alpha_0$, so if the system is in the initial state $s_0$, then the tape is empty.
	\item Finally, the formula $\mathit{symbol\_unchanged}_T$ guarantees that the symbol remains unchanged everywhere other than under the read/write head.
\end{itemize}
Now, consider $\psi_T$, which forces the transitions to satisfy $\Delta$.
\begin{itemize}
	\item The formula $\mathit{position\_change}_T$ guarantees that the read/write head moves in the correct direction, depending on the current symbol under the head and the current state.
	\item The formula $\mathit{state\_change}_T$ guarantees that the next state is as specified by $T$.
	\item The formula $\mathit{symbol\_change}_T$ guarantees that the correct symbol is written to the tape, as specified by $T$.
\end{itemize}
The last two conjuncts of $\phi_T$ do not contain a common knowledge operator. They state that the world $w_0$ satisfies $\mathit{pos}$ and $s_0$. So $w_0$ represents the point $(0,0)$. Because the rules of $T$ always require the read/write head to stay in he same position or to move to the left or right, this also implies that at every time after $w_0$ the head is in at least one position.

Taken together, the above shows that the valuation on the grid represents $\mathit{run}_T$. So the grid contains a $s_\mathit{end}$ state if and only if $T$ is halting. Since every reachable state is part of the grid, it follows that $\M,w_0\models C\neg s_\mathit{end}$ if and only if $T$ is non-halting.
\end{proof}

\begin{theorem}
The formula $\phi_T\rightarrow C\neg s_\mathit{end}$ is valid if and only if $T$ is non-halting. Furthermore, $\phi_T\rightarrow \neg C\neg s_\mathit{end}$ is valid if and ony if $T$ is halting.
\end{theorem}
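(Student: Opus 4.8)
The plan is to obtain the theorem as a direct combination of Lemma~\ref{lemma:sat} and Lemma~\ref{lemma:reduction}: the first supplies at least one pointed model of $\phi_T$, and the second pins down, uniformly across \emph{all} pointed models of $\phi_T$, whether $C\neg s_\mathit{end}$ holds there in terms of the (non-)halting of $T$. No new model construction or semantic argument is needed; the only care required is to keep the quantifier structure of ``valid'' straight.

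For the first biconditional I would argue both directions. Right-to-left: assume $T$ is non-halting and let $\M,w$ be an arbitrary pointed model. If $\M,w\not\models\phi_T$ then $\M,w\models\phi_T\rightarrow C\neg s_\mathit{end}$ holds vacuously; if $\M,w\models\phi_T$ then Lemma~\ref{lemma:reduction} gives $\M,w\models C\neg s_\mathit{end}$, so the implication holds again. As $\M,w$ was arbitrary, $\phi_T\rightarrow C\neg s_\mathit{end}$ is valid. Left-to-right: assume $\phi_T\rightarrow C\neg s_\mathit{end}$ is valid; by Lemma~\ref{lemma:sat} fix $\M,w$ with $\M,w\models\phi_T$, so $\M,w\models C\neg s_\mathit{end}$ by validity, and then Lemma~\ref{lemma:reduction} forces $T$ to be non-halting. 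Note that the appeal to Lemma~\ref{lemma:sat} is essential here: without a witnessing model of $\phi_T$ the implication could be valid merely vacuously.

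The second biconditional is handled symmetrically, using the contrapositive form of Lemma~\ref{lemma:reduction} (on any model of $\phi_T$, $C\neg s_\mathit{end}$ \emph{fails} exactly when $T$ halts). Right-to-left: if $T$ halts then every pointed model of $\phi_T$ satisfies $\neg C\neg s_\mathit{end}$, while a pointed model not satisfying $\phi_T$ satisfies the implication vacuously, so $\phi_T\rightarrow\neg C\neg s_\mathit{end}$ is valid. Left-to-right: if this implication is valid, instantiate it at a model of $\phi_T$ supplied by Lemma~\ref{lemma:sat} to get $\M,w\not\models C\neg s_\mathit{end}$, whence $T$ is not non-halting, i.e.\ $T$ halts. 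The only potential pitfall, and hence the ``hardest'' part, is purely logical: each direction genuinely needs both the ``every model'' content of Lemma~\ref{lemma:reduction} and the ``some model'' content of Lemma~\ref{lemma:sat}, so one must not conflate validity of $\phi_T\rightarrow\chi$ with truth of $\chi$ at a single fixed model. Finally, I would remark that the two biconditionals together yield the announced non-RE and non-co-RE results, since $T\mapsto\phi_T$ is computable while the halting and non-halting problems are RE and co-RE but not co-RE and RE, respectively.
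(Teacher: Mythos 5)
Your proposal is correct and follows essentially the same route as the paper: both derive the theorem purely from Lemma~\ref{lemma:sat} (supplying a witnessing model so the implications are not vacuously valid) and Lemma~\ref{lemma:reduction} (determining the truth of $C\neg s_\mathit{end}$ in every model of $\phi_T$). The paper merely organizes the argument as a case split on whether $T$ halts, deriving both validity claims in each case, which is logically the same bookkeeping you carry out direction by direction.
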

\begin{proof}
Suppose that $T$ is non-halting. Then, by Lemma~\ref{lemma:reduction}, we have $\models \phi_T\rightarrow C\neg s_\mathit{end}$. Furthermore, since $\phi_T$ is satisfiable, this implies that $\not\models \phi_T\rightarrow \neg C\neg s_\mathit{end}$.

Suppose, on the other hand, that $T$ is halting. Then, by Lemma~\ref{lemma:reduction}, we have $\models \phi_T\rightarrow \neg C\neg s_\mathit{end}$. Furthermore, since $\phi_T$ is satisfiable, this implies that $\not \models \phi_T\rightarrow C\neg s_\mathit{end}$.
\end{proof}

\begin{corollary}
The set of valid formulas of AAULC is neither RE not co-RE.
\end{corollary}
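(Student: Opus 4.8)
The plan is to obtain the corollary from the Theorem by exhibiting two many-one reductions from the non-halting problem, using the fact recalled in Section~\ref{sec:machines} that the set of non-halting Turing machines is not recursively enumerable.

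The first point to record is that the map $T \mapsto \phi_T$ is computable: given $T=(\Lambda,S,\Delta)$, the formula $\phi_T$ is built by instantiating the finitely many conjunctions of Tables~\ref{table:psi_grid}--\ref{table:psi_T} over the finite sets $\Lambda$, $S$ and $\agents$, which can plainly be done by an algorithm. Hence both $T \mapsto (\phi_T \rightarrow C\neg s_\mathit{end})$ and $T \mapsto (\phi_T \rightarrow \neg C\neg s_\mathit{end})$ are computable functions into $\mathcal{L}_\mathit{AAULC}$.

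To show that the valid formulas are not RE, I would argue by contradiction. If they were recursively enumerable, then composing such an enumeration with the computable map $T \mapsto (\phi_T \rightarrow C\neg s_\mathit{end})$ would yield a semi-decision procedure for the property ``$\phi_T \rightarrow C\neg s_\mathit{end}$ is valid'', which by the Theorem coincides with ``$T$ is non-halting''; this would make the non-halting problem RE, a contradiction. To show that the valid formulas are not co-RE --- equivalently, that the set of \emph{invalid} formulas is not RE --- I would run the dual argument using the second half of the Theorem: since $\phi_T \rightarrow \neg C\neg s_\mathit{end}$ is valid iff $T$ halts, it is invalid iff $T$ is non-halting, so an enumeration of the invalid formulas composed with $T \mapsto (\phi_T \rightarrow \neg C\neg s_\mathit{end})$ would again semi-decide non-halting, which is impossible.

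There is no real obstacle here: all the content already sits in Lemmas~\ref{lemma:sat} and~\ref{lemma:reduction} and in the Theorem, and the only point that warrants an explicit sentence is the effectiveness of $T \mapsto \phi_T$, which is immediate from the explicit shape of the tables. In short, the non-halting problem many-one reduces both to the validity problem of AAULC and to its complement, so neither set can be recursively enumerable.
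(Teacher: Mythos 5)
Your proposal is correct and is essentially the argument the paper leaves implicit: the corollary is stated without proof as an immediate consequence of the Theorem, via exactly the two computable reductions $T \mapsto (\phi_T \rightarrow C\neg s_\mathit{end})$ and $T \mapsto (\phi_T \rightarrow \neg C\neg s_\mathit{end})$ from the non-halting problem that you describe. Your explicit remark on the effectiveness of $T \mapsto \phi_T$ is a worthwhile addition, but the route is the same.
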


\begin{corollary}
AAULC does not have a finitary axiomatization.
\end{corollary}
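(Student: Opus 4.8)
The plan is to derive this corollary immediately from the previous one, using the standard fact that the theorems of any finitary proof system form a recursively enumerable set. First I would fix what ``finitary axiomatization'' is taken to mean: a recursive set of axiom schemes together with finitely many inference rules, where each rule is a decidable relation on tuples of formulas (this is in contrast to \emph{infinitary} systems, such as those using an $\omega$-style rule with infinitely many premises). The key observation is that, for such a system, the property ``$\pi$ is a well-formed derivation'' is decidable: $\pi$ is a finite sequence of formulas of $\mathcal{L}_\mathit{AAULC}$, and checking that each line is either an instance of some axiom scheme or follows from earlier lines by one of the rules is an effective test.

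Next I would spell out how this decidability yields a recursive enumeration of the theorems. Since $\mathcal{L}_\mathit{AAULC}$ is effectively presented and countable, one can enumerate all finite sequences of formulas, test each one for being a derivation, and output the last formula of every sequence that passes the test. This procedure lists exactly the provable formulas. If the axiomatization is in addition sound and complete, the provable formulas coincide with the valid formulas of AAULC, and hence the set of valid formulas would be recursively enumerable.

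Finally I would invoke the previous corollary, which asserts that the valid formulas of AAULC are not recursively enumerable (indeed neither RE nor co-RE). This directly contradicts the conclusion of the preceding paragraph, so no sound and complete finitary axiomatization of AAULC can exist; by Lemma~\ref{lemma:sat} and Lemma~\ref{lemma:reduction} the relevant reduction goes through, and the contrapositive gives the claim.

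I do not anticipate any genuine obstacle here: all of the mathematical content has already been carried by the encoding of Turing machines in $\phi_T$ and by the reduction underlying the non-RE result. The only points that warrant a sentence of care are making the definition of ``finitary axiomatization'' precise enough that proof-checking is genuinely decidable, and being explicit that ``axiomatization'' is understood to be \emph{sound and complete} --- an arbitrary recursive set of formulas closed under some rules would of course be RE, but need not capture validity, so it is the soundness-and-completeness requirement that forces the contradiction.
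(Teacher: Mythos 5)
Your argument is correct and is exactly the one the paper intends: the paper leaves this corollary without an explicit proof, relying on the standard observation (stated in its introduction) that a recursive/finitary axiomatization would allow one to enumerate the valid formulas, contradicting the preceding corollary that the validities of AAULC are not RE. Your added care about what ``finitary axiomatization'' means and the soundness-and-completeness requirement only makes the implicit argument explicit.
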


\section{Conclusion}
The validity problems for the quantified update logics APAL, GAL, CAL and AAUL are known not to be co-RE. It is not currently known whether these problems are RE. This question is particularly relevant because if the validity problem of a logic is not RE, then that logics cannot have a recursive axiomatization.

The logic AAULC adds a common knowledge operator to AAUL. Here, we showed that the validity problem of AAULC is not RE, using a reduction from the non-halting problem of Turing machines. This reduction uses the common knowledge operator $C$, so it does not immediately follow that the validity problem of AAUL is not RE. Still, we believe that the proof presented here can be adapted for AAUL.

It is less clear whether our reduction could be adapted for APAL, GAL and CAL. Still, it seems worthwhile to attempt to modify this reduction for APAL, GAL and CAL. If such an attempt succeeds, it would show that these logics are nor recursively axiomatizable. Or if the attemp fails, then the way in which it fails might provide a hint about how to prove that the validity problems of these logics are RE.

\bibliographystyle{eptcs}
\bibliography{arbitraryarrows}

\providecommand{\noopsort}[1]{}
\begin{thebibliography}{10}
\providecommand{\bibitemdeclare}[2]{}
\providecommand{\surnamestart}{}
\providecommand{\surnameend}{}
\providecommand{\urlprefix}{Available at }
\providecommand{\url}[1]{\texttt{#1}}
\providecommand{\href}[2]{\texttt{#2}}
\providecommand{\urlalt}[2]{\href{#1}{#2}}
\providecommand{\doi}[1]{doi:\urlalt{http://dx.doi.org/#1}{#1}}
\providecommand{\bibinfo}[2]{#2}

\bibitemdeclare{article}{Agotnes2010}
\bibitem{Agotnes2010}
\bibinfo{author}{Thomas \surnamestart \AA{}gotnes\surnameend},
  \bibinfo{author}{Philippe \surnamestart Balbiani\surnameend},
  \bibinfo{author}{Hans \surnamestart van Ditmarsch\surnameend} \&
  \bibinfo{author}{Pablo \surnamestart Seban\surnameend}
  (\bibinfo{year}{2010}): \emph{\bibinfo{title}{Group announcement logic}}.
\newblock {\sl \bibinfo{journal}{Journal of Applied Logic}}
  \bibinfo{volume}{8}(\bibinfo{number}{1}), pp. \bibinfo{pages}{62 -- 81},
  \doi{10.1016/j.jal.2008.12.002}.

\bibitemdeclare{inproceedings}{CAL}
\bibitem{CAL}
\bibinfo{author}{Thomas \surnamestart {\AA}gotnes\surnameend} \&
  \bibinfo{author}{Hans \surnamestart van Ditmarsch\surnameend}
  (\bibinfo{year}{2008}): \emph{\bibinfo{title}{Coalitions and Announcements}}.
\newblock In \bibinfo{editor}{\surnamestart Padgham\surnameend},
  \bibinfo{editor}{\surnamestart Parkes\surnameend},
  \bibinfo{editor}{\surnamestart M{\"u}ller\surnameend} \&
  \bibinfo{editor}{\surnamestart Parsons\surnameend}, editors: {\sl
  \bibinfo{booktitle}{Proc. of 7th Int. Conf. on Autonomous Agents and
  Multi-agent Systems (AAMAS 2008)}}, pp. \bibinfo{pages}{673--680}.

\bibitemdeclare{inproceedings}{agotnes2014}
\bibitem{agotnes2014}
\bibinfo{author}{Thomas \surnamestart {\AA}gotnes\surnameend},
  \bibinfo{author}{Hans \surnamestart van Ditmarsch\surnameend} \&
  \bibinfo{author}{Tim \surnamestart French\surnameend} (\bibinfo{year}{2014}):
  \emph{\bibinfo{title}{The Undecidability of Group Announcements}}.
\newblock In \bibinfo{editor}{Ana Bazzan Michael~Huhns \surnamestart
  Alessio~Lomuscio\surnameend, Paul~Scerri}, editor: {\sl
  \bibinfo{booktitle}{Proceedings of the 13th International Conference on
  Autonomous Agents and Multiagent Systems (AAMAS 2014)}}, pp.
  \bibinfo{pages}{893--900}.

\bibitemdeclare{article}{agotnesetal:2016}
\bibitem{agotnesetal:2016}
\bibinfo{author}{Thomas \surnamestart {\AA}gotnes\surnameend},
  \bibinfo{author}{Hans \surnamestart van Ditmarsch\surnameend} \&
  \bibinfo{author}{Tim \surnamestart French\surnameend} (\bibinfo{year}{2016}):
  \emph{\bibinfo{title}{The Undecidability of Quantified Announcements}}.
\newblock {\sl \bibinfo{journal}{Studia Logica}}
  \bibinfo{volume}{104}(\bibinfo{number}{4}), pp. \bibinfo{pages}{597--640},
  \doi{10.1007/s11225-016-9657-0}.

\bibitemdeclare{article}{APAL}
\bibitem{APAL}
\bibinfo{author}{Philippe \surnamestart Balbiani\surnameend},
  \bibinfo{author}{Alexandru \surnamestart Baltag\surnameend},
  \bibinfo{author}{Hans \surnamestart van Ditmarsch\surnameend},
  \bibinfo{author}{Andreas \surnamestart Herzig\surnameend},
  \bibinfo{author}{Tomohiro \surnamestart Hoshi\surnameend} \&
  \bibinfo{author}{Tiago \surnamestart de~Lima\surnameend}
  (\bibinfo{year}{2008}): \emph{\bibinfo{title}{`{K}nowable' as `known after an
  announcement'}}.
\newblock {\sl \bibinfo{journal}{Review of Symbolic Logic}}
  \bibinfo{volume}{1}(\bibinfo{number}{3}), pp. \bibinfo{pages}{205--334},
  \doi{10.1017/S1755020308080210}.

\bibitemdeclare{article}{refinement}
\bibitem{refinement}
\bibinfo{author}{Laura \surnamestart Bozzelli\surnameend},
  \bibinfo{author}{Hans \surnamestart van Ditmarsch\surnameend},
  \bibinfo{author}{Tim \surnamestart French\surnameend}, \bibinfo{author}{James
  \surnamestart Hales\surnameend} \& \bibinfo{author}{Sophie \surnamestart
  Pinchinat\surnameend} (\bibinfo{year}{2014}):
  \emph{\bibinfo{title}{Refinement modal logic}}.
\newblock {\sl \bibinfo{journal}{Information and Computation}}
  \bibinfo{volume}{239}, pp. \bibinfo{pages}{303--339},
  \doi{10.1016/j.ic.2014.07.013}.

\bibitemdeclare{article}{AAULAIJ}
\bibitem{AAULAIJ}
\bibinfo{author}{Hans \surnamestart
  {\noopsort{Ditmarsch}}van~Ditmarsch\surnameend}, \bibinfo{author}{Wiebe
  \surnamestart van~der Hoek\surnameend}, \bibinfo{author}{Barteld
  \surnamestart Kooi\surnameend} \& \bibinfo{author}{Louwe~B. \surnamestart
  Kuijer\surnameend} (\bibinfo{year}{2017}): \emph{\bibinfo{title}{Arbitrary
  Arrow Update Logic}}.
\newblock {\sl \bibinfo{journal}{Artificial Intelligence}}
  \bibinfo{volume}{242}, pp. \bibinfo{pages}{80--106},
  \doi{10.1016/j.artint.2016.10.003}.

\bibitemdeclare{unpublished}{undecidable}
\bibitem{undecidable}
\bibinfo{author}{Hans \surnamestart
  {\noopsort{Ditmarsch}}van~Ditmarsch\surnameend}, \bibinfo{author}{Wiebe
  \surnamestart van~der Hoek\surnameend} \& \bibinfo{author}{Louwe~B.
  \surnamestart Kuijer\surnameend} (\bibinfo{year}{2016}):
  \emph{\bibinfo{title}{The Undecidability of Arbitrary Arrow Update Logic}}.
\newblock \bibinfo{note}{ArXiv:1609.05686}.

\bibitemdeclare{inproceedings}{french08}
\bibitem{french08}
\bibinfo{author}{Tim \surnamestart French\surnameend} \& \bibinfo{author}{Hans
  \surnamestart van Ditmarsch\surnameend} (\bibinfo{year}{2008}):
  \emph{\bibinfo{title}{Undecidability for arbitrary public announcement
  logic}}.
\newblock In \bibinfo{editor}{C.~\surnamestart Areces\surnameend} \&
  \bibinfo{editor}{R.~\surnamestart Goldblatt\surnameend}, editors: {\sl
  \bibinfo{booktitle}{Proceedings of the seventh conference `Advances in Modal
  Logic'}}, \bibinfo{publisher}{College Publications},
  \bibinfo{address}{London}, pp. \bibinfo{pages}{23--42}.

\bibitemdeclare{inproceedings}{hales13}
\bibitem{hales13}
\bibinfo{author}{James \surnamestart Hales\surnameend} (\bibinfo{year}{2013}):
  \emph{\bibinfo{title}{Arbitrary action model logic and action model
  synthesis}}.
\newblock In: {\sl \bibinfo{booktitle}{28th Annual ACM/IEEE Symposium on Logic
  in Computer Science}}, \bibinfo{organization}{IEEE}, pp.
  \bibinfo{pages}{253--262}, \doi{10.1109/LICS.2013.31}.

\bibitemdeclare{article}{AUL}
\bibitem{AUL}
\bibinfo{author}{Barteld \surnamestart Kooi\surnameend} \&
  \bibinfo{author}{Bryan \surnamestart Renne\surnameend}
  (\bibinfo{year}{2011}): \emph{\bibinfo{title}{Arrow update logic}}.
\newblock {\sl \bibinfo{journal}{Review of Symbolic Logic}}
  \bibinfo{volume}{4}(\bibinfo{number}{4}), pp. \bibinfo{pages}{536--559},
  \doi{10.1017/S1755020311000189}.

\bibitemdeclare{article}{turing1937}
\bibitem{turing1937}
\bibinfo{author}{Alan~M. \surnamestart Turing\surnameend}
  (\bibinfo{year}{1937}): \emph{\bibinfo{title}{On Computable Numbers, with an
  Application to the Entscheidungsproblem}}.
\newblock {\sl \bibinfo{journal}{Proceedings of the London Mathematical
  Society}} \bibinfo{volume}{s2-42}(\bibinfo{number}{1}), pp.
  \bibinfo{pages}{230--265}, \doi{10.1112/plms/s2-42.1.230}.

\end{thebibliography}

\end{document}